\theoremstyle{plain}
   \newtheorem{thm}{Theorem}
   \newtheorem{cor}{Corollary}
\begin{document}

\begin{center}
 \Large\bf{D-optimal joint best linear unbiased prediction of order statistics}
\end{center}
\begin{center}
\bf {\footnotesize Narayanaswamy Balakrishnan$^1$ and Ritwik Bhattacharya$^2$}
\end{center}
 \begin{center}
\textit{\scriptsize $^1$Department of Mathematics and Statistics, McMaster University, Hamilton, ON L8S 4K1, Canada}\\
 \textit{\scriptsize $^2$Department of Industrial Engineering, School of Engineering and Sciences, Tecnol\'{o}gico de Monterrey, Quer\'{e}taro 76130,  M\'{e}xico}\\
 \end{center}

\begin{abstract}
In life-testing experiments, it is often of interest to predict unobserved future failure times based on observed early failure times. A point best linear unbiased predictor (BLUP) has been developed in this context by \cite{Kaminsky_1975}. In this article, we develop joint BLUPs of two future failure times based on early failure times by minimizing the determinant of the variance-covariance matrix of the predictors. The advantage of applying joint prediction is demonstrated by using a real data set. The non-existence of joint BLUPs in certain setups is also discussed. 
\end{abstract}

 {\textbf{Keywords}}: Best linear unbiased estimate (BLUE), Best linear unbiased predictor (BLUP), Location-scale family of distribution, Lagrangian method, Order statistics, Scale family of distributions, Type-II right censored samples, Variance-covariance matrix.

\section{Introduction} \label{sec1}
\paragraph{}
The issue of predicting future unobserved failure times is of great interest in  reliability life-testing experiments. The aim in this case is to predict the unobserved failure times based on observed early failures as data.  For instance, let us assume a continuous distribution with probability density function (pdf)
\begin{equation}\label{pdf}
	\frac{1}{\sigma} f\left(\frac{x-\mu}{\sigma} \right),
\end{equation}
where $\mu$ and $\sigma$ are the location and scale parameters, respectively. Let $X_{1:n} <  \cdots < X_{n:n}$ denote the $n$ ordered observations from (\ref{pdf}).  Suppose the first $r$ order statistics, representing a Type-II right censored sample, are observed. Then, our main interest is in predicting the future $(n-r)$ unobserved failure times, that is, $(r+1)$th, $(r+2)$th, $\cdots$, $n$th failure times, based on the first $r$ order statistics observed. This can also be viewed in the context of an $n$-component parallel system. Based on the information on the first $r$ components that have already failed, we may wish to predict the system failure time, which is simply the $n$th ordered failure time.  \cite{Kaminsky_1975} obtained the best linear unbiased predictor (BLUP) of $X_{s:n}$ based on the observed values of $X_{1:n} < X_{2:n} < \cdots < X_{r:n}$, where $1\leq r<s\leq n$. Their predictor was developed  based on the result of \cite{Goldberger_1962}, who studied the best linear unbiased prediction in a generalized linear regression model. \cite{Doganaksoy_1997} showed, under Gauss-Markov model, that the best linear unbiased estimators of the model parameters remain unchanged if the predicted values of the dependent variables based on best linear unbiased predictors are treated as observed values in the best linear unbiased estimation of parameters. This property does simplify considerably the computation of point predictors. The best linear unbiased estimations, in addition to this feature, also posses some more interesting properties as shown by \cite{Rao_1997}. Due to the great importance of this prediction in life-testing experiments, \cite{Nelson_book} has presented numerous tables for the best linear unbiased prediction of order statistics for different lifetime distributions of interest. \\

 Prediction of order statistics can also be used for detecting outliers \citep[see][]{Balasooriya_1989}. An interval prediction is commonly considered instead of point prediction.  However, the interval prediction requires approximate methods for most models, with the exception of a few like exponential. A detailed survey of various prediction intervals can be found in \cite{Patel_1989}. For a comprehensive review of prediction problems in ordered data, one may refer to \cite{Kaminsky_1998}. While there are many such works on point prediction and prediction intervals for ordered data, the joint prediction of order statistics has not been dealt with. By joint prediction, we mean the simultaneous prediction of two or more order statistics. This provides the motivation for the present work. In  subsequent sections, we derive analytical expressions for the joint best linear unbiased predictors of two future order statistics by minimizing the determinant of the variance-covariance matrix of the predictors, resulting in the joint predictors being D-optimal. The advantage of joint predictors over marginal predictors is demonstrated by analyzing the design efficiency. The non-existence of joint BLUPs in certain situations is also established. \\

The rest of this paper is organized as follows. Analytic expressions of the joint predictors are derived in detail in Section 2. The non-existence of BLUPs in certain specific situations is established in Section 3. The proposed method is illustrated through some numerical results in Section 4. A real-life data is analyzed to demonstrate the advantage of joint predictors over marginal predictors. Finally, some concluding remarks are made in Section 5.

\section{Joint best linear unbiased predictors} 
\paragraph{}
In this section, we derive explicit expressions for the joint predictors of two order statistics under D-optimality criterion. For this purpose, let us denote the vector of first $r$ order statistics (observed from a life-test) from a sample of size $n$ by  $$\boldsymbol{X} = (X_{1:n}, \cdots, X_{r:n})^{\prime}_{r\times1}.$$We are then interested in the joint predictors $\tilde{X}_s$ and $\tilde{X}_t$ of $X_{s:n}$ and $X_{t:n}$, respectively, where $r<s<t\leq n$. Let $\alpha_i$ denote the expected value of the standardized order statistic $$Z_{i:n}=\frac{X_{i:n}-\mu}{\sigma}, ~i=1,\cdots,n,$$and let us further denote $$\boldsymbol{\alpha}=(\alpha_1, \cdots, \alpha_r)^{\prime}_{r\times1}.$$Also, let us denote the variance-covariance matrix of $\boldsymbol{X}$ by $\sigma^2 \boldsymbol{\Sigma}$, where $ \boldsymbol{\Sigma}$ is the $r\times r$ covariance matrix of $Z_{i:n}, i=1,\cdots, r$. In this notation, the marginal best linear unbiased predictor $\hat{X}_s$ of $X_{s:n}$ has been derived by \cite{Kaminsky_1975}, using the results of \cite{Goldberger_1962},  as
\begin{equation}\label{BLUPs}
	\hat{X}_s = \hat{\mu} + \hat{\sigma}\alpha_s + \boldsymbol{\omega}^{\prime}_s\boldsymbol{\Sigma}^{-1}(\boldsymbol{X}-\hat{\mu}\boldsymbol{1} - \hat{\sigma}\boldsymbol{\alpha}),
\end{equation}where $\boldsymbol{1} = (1,\cdots, 1)^{\prime}_{r\times 1},$ and $\boldsymbol{\omega}_s = (\omega_1,\cdots, \omega_r)^{\prime}_{r\times 1}$, with $\omega_i = \mbox{Cov}(Z_{i:n}, Z_{s:n})$.  Similarly, the marginal best linear unbiased predictor $\hat{X}_t$ of $X_{t:n}$ is exactly as in (\ref{BLUPs}) with $\alpha_s$ and $\boldsymbol{\omega}^{\prime}_s$ being replaced by $\alpha_t$ and $\boldsymbol{\omega}^{\prime}_t$, respectively. In (\ref{BLUPs}), $\hat{\mu}$ and $\hat{\sigma}$ are the best linear unbiased estimates (BLUEs) of $\mu$ and $\sigma$, respectively, based on $\boldsymbol{X} $, given by
{\small{\begin{eqnarray}\nonumber
			\hat{\mu} &=& \frac{1}{\bigtriangleup}\{ (\boldsymbol{\alpha}^{\prime} \boldsymbol{\Sigma}^{-1} \boldsymbol{\alpha})(\boldsymbol{1}^{\prime} \boldsymbol{\Sigma}^{-1}) -  (\boldsymbol{\alpha}^{\prime} \boldsymbol{\Sigma}^{-1} \boldsymbol{1})(\boldsymbol{\alpha}^{\prime} \boldsymbol{\Sigma}^{-1})\}\boldsymbol{X},\\\nonumber
			\hat{\sigma} &=& \frac{1}{\bigtriangleup}\{ (\boldsymbol{1}^{\prime} \boldsymbol{\Sigma}^{-1} \boldsymbol{1})(\boldsymbol{\alpha}^{\prime} \boldsymbol{\Sigma}^{-1}) -  (\boldsymbol{1}^{\prime} \boldsymbol{\Sigma}^{-1} \boldsymbol{\alpha})(\boldsymbol{1}^{\prime} \boldsymbol{\Sigma}^{-1})\}\boldsymbol{X},\\\nonumber
			\mbox{Var}(	\hat{\mu}) &=& \frac{\boldsymbol{\alpha}^{\prime} \boldsymbol{\Sigma}^{-1} \boldsymbol{\alpha}}{\bigtriangleup} \sigma^2,~ \mbox{Var}(	\hat{\sigma}) ~=~ \frac{\boldsymbol{1}^{\prime} \boldsymbol{\Sigma}^{-1} \boldsymbol{1}}{\bigtriangleup}\sigma^2,~ \mbox{Cov}(	\hat{\mu},	\hat{\sigma}) ~=~ -\frac{\boldsymbol{1}^{\prime} \boldsymbol{\Sigma}^{-1} \boldsymbol{\alpha}}{\bigtriangleup}\sigma^2,\\\label{bigdelta}
			\bigtriangleup &=& (\boldsymbol{1}^{\prime} \boldsymbol{\Sigma}^{-1} \boldsymbol{1}) (\boldsymbol{\alpha}^{\prime} \boldsymbol{\Sigma}^{-1} \boldsymbol{\alpha}) - (\boldsymbol{1}^{\prime} \boldsymbol{\Sigma}^{-1} \boldsymbol{\alpha})^2,
\end{eqnarray}}}with $\bigtriangleup$ being the generalized variance of BLUEs ($\hat{\mu}, \hat{\sigma}$) based on $\boldsymbol{X}$; see \cite{Cohen_book} for pertinent details.\\

\begin{thm}
	The joint best linear unbiased predictors  $\tilde{X}_s$ and $\tilde{X}_t$, determined by the D-optimality criterion, are of the form $\tilde{X}_s = \boldsymbol{a}^{\prime}\boldsymbol{X}$ and $\tilde{X}_t = \boldsymbol{b}^{\prime}\boldsymbol{X}$ in which the coefficients $\boldsymbol{a}=(a_1, \cdots, a_r)^{\prime}_{r\times 1}$ and $\boldsymbol{b}=(b_1, \cdots, b_r)^{\prime}_{r\times 1}$ are given by
	\begin{equation}\label{a}
		\boldsymbol{a} = \frac{1}{\bigtriangleup} \begin{bmatrix}
			\displaystyle\sum_{i=1}^{r} (\alpha_i - \alpha_s) (S_iR_1 - R_iS_1)\\
			\displaystyle\sum_{i=1}^{r} (\alpha_i - \alpha_s) (S_iR_2 - R_iS_2)\\
			\vdots\\
			\displaystyle\sum_{i=1}^{r} (\alpha_i - \alpha_s) (S_iR_r - R_iS_r)
		\end{bmatrix} 
	\end{equation}
and 
\begin{equation}\label{b}
	\boldsymbol{b} = \frac{1}{\bigtriangleup} \begin{bmatrix}
		\displaystyle\sum_{i=1}^{r} (\alpha_i - \alpha_t) (R_iS_1 - S_iR_1)\\
		\displaystyle\sum_{i=1}^{r} (\alpha_i - \alpha_t) (R_iS_2 - S_iR_2)\\
		\vdots\\
		\displaystyle\sum_{i=1}^{r} (\alpha_i - \alpha_t) (R_iS_r - S_iR_r)
	\end{bmatrix},
\end{equation}where $R_i$ and $S_i$ are the sums of the $i$th rows of the matrices $\boldsymbol{\Sigma}^{-1}$ and $\boldsymbol{\Sigma}^{-1}\boldsymbol{\alpha}$, respectively, and $\bigtriangleup$ is as defined earlier in (\ref{bigdelta}).  
\end{thm}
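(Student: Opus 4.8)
The plan is to treat D-optimality as a constrained minimization and solve it by the Lagrangian method. First I would record what unbiasedness requires. Since $E(\boldsymbol{X}) = \mu\boldsymbol{1} + \sigma\boldsymbol{\alpha}$, asking that $E(\tilde{X}_s) = \mu + \sigma\alpha_s$ and $E(\tilde{X}_t) = \mu + \sigma\alpha_t$ for every $\mu$ and $\sigma$ is equivalent to the four linear constraints
\begin{equation}\label{constraints}
\boldsymbol{a}^{\prime}\boldsymbol{1} = 1,\qquad \boldsymbol{a}^{\prime}\boldsymbol{\alpha} = \alpha_s,\qquad \boldsymbol{b}^{\prime}\boldsymbol{1} = 1,\qquad \boldsymbol{b}^{\prime}\boldsymbol{\alpha} = \alpha_t.
\end{equation}
With $P = \boldsymbol{a}^{\prime}\boldsymbol{\Sigma}\boldsymbol{a}$, $Q = \boldsymbol{b}^{\prime}\boldsymbol{\Sigma}\boldsymbol{b}$ and $T = \boldsymbol{a}^{\prime}\boldsymbol{\Sigma}\boldsymbol{b}$, the variance-covariance matrix of $(\tilde{X}_s,\tilde{X}_t)^{\prime}$ is $\sigma^2\left(\begin{smallmatrix}P & T\\ T & Q\end{smallmatrix}\right)$, so the task is to minimize $f(\boldsymbol{a},\boldsymbol{b}) = PQ - T^2$ subject to (\ref{constraints}).

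Next I would form the Lagrangian
\[
L = PQ - T^2 - 2\lambda_1(\boldsymbol{a}^{\prime}\boldsymbol{1} - 1) - 2\lambda_2(\boldsymbol{a}^{\prime}\boldsymbol{\alpha} - \alpha_s) - 2\lambda_3(\boldsymbol{b}^{\prime}\boldsymbol{1} - 1) - 2\lambda_4(\boldsymbol{b}^{\prime}\boldsymbol{\alpha} - \alpha_t)
\]
and impose stationarity. Using $\partial P/\partial\boldsymbol{a} = 2\boldsymbol{\Sigma}\boldsymbol{a}$, $\partial T/\partial\boldsymbol{a} = \boldsymbol{\Sigma}\boldsymbol{b}$ and their counterparts in $\boldsymbol{b}$, the equations $\partial L/\partial\boldsymbol{a} = \partial L/\partial\boldsymbol{b} = \boldsymbol{0}$ become
\begin{align}
Q\,\boldsymbol{\Sigma}\boldsymbol{a} - T\,\boldsymbol{\Sigma}\boldsymbol{b} &= \lambda_1\boldsymbol{1} + \lambda_2\boldsymbol{\alpha},\label{stat1}\\
-\,T\,\boldsymbol{\Sigma}\boldsymbol{a} + P\,\boldsymbol{\Sigma}\boldsymbol{b} &= \lambda_3\boldsymbol{1} + \lambda_4\boldsymbol{\alpha}.\label{stat2}
\end{align}
The step I expect to be the main obstacle is disentangling this coupled pair. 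The idea is to read (\ref{stat1})--(\ref{stat2}) as the $2\times 2$ matrix $\left(\begin{smallmatrix}Q & -T\\ -T & P\end{smallmatrix}\right)$ applied to $(\boldsymbol{\Sigma}\boldsymbol{a},\boldsymbol{\Sigma}\boldsymbol{b})$; its determinant $PQ - T^2$ is strictly positive, because $\boldsymbol{\Sigma}$ is positive definite and $\boldsymbol{a},\boldsymbol{b}$ must be linearly independent (if $\boldsymbol{a} = \lambda\boldsymbol{b}$ then (\ref{constraints}) forces $\lambda = 1$, hence $\alpha_s = \alpha_t$, contradicting $\alpha_1 < \cdots < \alpha_n$). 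Thus the matrix is invertible, so $\boldsymbol{\Sigma}\boldsymbol{a}$ and $\boldsymbol{\Sigma}\boldsymbol{b}$ both lie in $\mathrm{span}\{\boldsymbol{1},\boldsymbol{\alpha}\}$; equivalently $\boldsymbol{a} = p_1\boldsymbol{\Sigma}^{-1}\boldsymbol{1} + p_2\boldsymbol{\Sigma}^{-1}\boldsymbol{\alpha}$ and $\boldsymbol{b} = q_1\boldsymbol{\Sigma}^{-1}\boldsymbol{1} + q_2\boldsymbol{\Sigma}^{-1}\boldsymbol{\alpha}$ for scalars $p_i,q_i$ — the joint problem decouples at its optimum.

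What remains is essentially bookkeeping. Imposing (\ref{constraints}) on these forms gives $\left(\begin{smallmatrix}A & B\\ B & C\end{smallmatrix}\right)(p_1,p_2)^{\prime} = (1,\alpha_s)^{\prime}$, where $A = \boldsymbol{1}^{\prime}\boldsymbol{\Sigma}^{-1}\boldsymbol{1}$, $B = \boldsymbol{1}^{\prime}\boldsymbol{\Sigma}^{-1}\boldsymbol{\alpha}$, $C = \boldsymbol{\alpha}^{\prime}\boldsymbol{\Sigma}^{-1}\boldsymbol{\alpha}$, and whose determinant is exactly $\bigtriangleup$ of (\ref{bigdelta}); hence $p_1 = (C - \alpha_s B)/\bigtriangleup$, $p_2 = (\alpha_s A - B)/\bigtriangleup$, and similarly with $\alpha_t$ for $(q_1,q_2)$. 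Passing to coordinates through $R_i = (\boldsymbol{\Sigma}^{-1}\boldsymbol{1})_i$, $S_i = (\boldsymbol{\Sigma}^{-1}\boldsymbol{\alpha})_i$ and the identities $A = \sum_i R_i$, $B = \sum_i S_i = \sum_i\alpha_i R_i$, $C = \sum_i\alpha_i S_i$, one has $C - \alpha_s B = \sum_i(\alpha_i - \alpha_s)S_i$ and $\alpha_s A - B = -\sum_i(\alpha_i - \alpha_s)R_i$, so the $k$th entry of $\boldsymbol{a} = p_1\boldsymbol{\Sigma}^{-1}\boldsymbol{1} + p_2\boldsymbol{\Sigma}^{-1}\boldsymbol{\alpha}$ equals $\frac{1}{\bigtriangleup}\sum_i(\alpha_i - \alpha_s)(S_i R_k - R_i S_k)$, which is (\ref{a}); the identical computation with $\alpha_s$ replaced by $\alpha_t$ yields (\ref{b}).

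Finally I would confirm that this unique stationary point is the global minimum rather than a saddle. Writing an arbitrary feasible pair as $\boldsymbol{a}^{\ast} + \boldsymbol{h}$, $\boldsymbol{b}^{\ast} + \boldsymbol{k}$, where $\boldsymbol{a}^{\ast},\boldsymbol{b}^{\ast}$ is the solution above and $\boldsymbol{h},\boldsymbol{k}$ are orthogonal to both $\boldsymbol{1}$ and $\boldsymbol{\alpha}$ (as feasibility requires), the relation $\boldsymbol{\Sigma}\boldsymbol{a}^{\ast},\boldsymbol{\Sigma}\boldsymbol{b}^{\ast}\in\mathrm{span}\{\boldsymbol{1},\boldsymbol{\alpha}\}$ makes the first-order terms in $\boldsymbol{h},\boldsymbol{k}$ drop out, leaving $f(\boldsymbol{a},\boldsymbol{b}) - f(\boldsymbol{a}^{\ast},\boldsymbol{b}^{\ast}) = (Q_0 u + P_0 v - 2T_0 w) + (uv - w^2)$, where $P_0,Q_0,T_0$ are the optimal values of $P,Q,T$ and $u = \boldsymbol{h}^{\prime}\boldsymbol{\Sigma}\boldsymbol{h}$, $v = \boldsymbol{k}^{\prime}\boldsymbol{\Sigma}\boldsymbol{k}$, $w = \boldsymbol{h}^{\prime}\boldsymbol{\Sigma}\boldsymbol{k}$. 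Two applications of the Cauchy--Schwarz inequality, $w^2 \le uv$ and $T_0^2 \le P_0 Q_0$, make the right-hand side $\ge (\sqrt{Q_0 u} - \sqrt{P_0 v})^2 + (uv - w^2) \ge 0$, which establishes D-optimality. (It is worth noting that the resulting predictors are just $\hat{\mu} + \hat{\sigma}\alpha_s$ and $\hat{\mu} + \hat{\sigma}\alpha_t$, the BLUEs of $E(X_{s:n})$ and $E(X_{t:n})$.)
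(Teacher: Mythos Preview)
Your proof is correct and lands on the same formulas, but it reaches them by a tighter route than the paper's. Both arguments start from the same Lagrangian and the same pair of stationarity equations. The paper then pre-multiplies each equation by $\boldsymbol{a}^{\prime}$ and by $\boldsymbol{b}^{\prime}$ to solve explicitly for all four multipliers (as multiples of $|\boldsymbol{V}|$), substitutes them back, rewrites the pair as $[\,\boldsymbol{a}\ \boldsymbol{b}\,]\,\mathrm{adj}(\boldsymbol{V})=\cdots$, expresses $\boldsymbol{a},\boldsymbol{b}$ in terms of the still-unknown entries $V_{11},V_{12},V_{22}$, imposes unbiasedness to pin those down, and only then collapses everything to the stated coefficients. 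You skip that entire chain by reading (\ref{stat1})--(\ref{stat2}) as an invertible $2\times 2$ system in $(\boldsymbol{\Sigma}\boldsymbol{a},\boldsymbol{\Sigma}\boldsymbol{b})$ and concluding at once that $\boldsymbol{a},\boldsymbol{b}\in\boldsymbol{\Sigma}^{-1}\mathrm{span}\{\boldsymbol{1},\boldsymbol{\alpha}\}$; the four constraints then fix the four free scalars via a single $2\times 2$ solve. What the paper's longer path buys is that $V_{11},V_{12},V_{22}$ are obtained en route, which is exactly the content of the subsequent corollary; your path gives them too, but you would have to go back and compute $\boldsymbol{a}^{\prime}\boldsymbol{\Sigma}\boldsymbol{a}$ etc.\ separately. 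Two things you supply that the paper's proof omits are worth keeping: the verification that the unique stationary point is in fact the global minimum (the paper stops at first-order conditions), and the observation that the resulting predictors are simply $\hat\mu+\hat\sigma\alpha_s$ and $\hat\mu+\hat\sigma\alpha_t$, which makes the contrast with the Kaminsky--Nelson marginal BLUPs in~(\ref{BLUPs}) transparent.
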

\begin{proof}

 The BLUPs will now be derived jointly by minimizing the determinant of the variance-covariance matrix of BLUPs with respect to vectors $ \boldsymbol{a}$ and $ \boldsymbol{b}$. Note that the variance-covariance matrix, say $\boldsymbol{V}$, of BLUPs $\tilde{X}_s$ and $\tilde{X}_t$ is of the form
\begin{equation*}
	\boldsymbol{V} = \sigma^2 \begin{bmatrix} \boldsymbol{a}^{\prime}\boldsymbol{\Sigma}\boldsymbol{a} &  \boldsymbol{a}^{\prime}\boldsymbol{\Sigma}\boldsymbol{b}\\ \boldsymbol{a}^{\prime}\boldsymbol{\Sigma}\boldsymbol{b} & \boldsymbol{b}^{\prime}\boldsymbol{\Sigma}\boldsymbol{b}\end{bmatrix}  = \sigma^2 \begin{bmatrix} V_{11} &  V_{12}\\ V_{12} & V_{22}\end{bmatrix} (\mbox{say}),
\end{equation*}with $|\boldsymbol{V}| = V_{11}V_{22}-V^2_{12}$ and $$\boldsymbol{V}^{-1} = \frac{1}{|\boldsymbol{V}|}\begin{bmatrix} V_{22} &  -V_{12}\\ -V_{12} & V_{11}\end{bmatrix}.$$Note that we can leave the multiplicative factor $\sigma^2$ in the minimization process. Moreover, while minimizing $|\boldsymbol{V}|$, we must impose four constraints due to the unbiasedness of BLUPs and those are evidently $\boldsymbol{a}^{\prime}\boldsymbol{1}=1$, $\boldsymbol{a}^{\prime}\boldsymbol{\alpha}=\alpha_s$, $\boldsymbol{b}^{\prime}\boldsymbol{1}=1$ and $\boldsymbol{b}^{\prime}\boldsymbol{\alpha}=\alpha_t$. So, the Lagrangian method can be employed to determine the optimal $\boldsymbol{a}$ and $\boldsymbol{b}$ by considering the objective function
\begin{eqnarray}\nonumber
Q(\boldsymbol{a}, \boldsymbol{b}) &=& (\boldsymbol{a}^{\prime}\boldsymbol{\Sigma}\boldsymbol{a})(\boldsymbol{b}^{\prime}\boldsymbol{\Sigma}\boldsymbol{b})	- (\boldsymbol{a}^{\prime}\boldsymbol{\Sigma}\boldsymbol{b})^2 - 2\lambda_1(\boldsymbol{a}^{\prime}\boldsymbol{1}-1) - 2\lambda^{*}_1(\boldsymbol{a}^{\prime}\boldsymbol{\alpha}-\alpha_s)\\\label{Lobg}
&&-2\lambda_2(\boldsymbol{b}^{\prime}\boldsymbol{1}-1) - 2\lambda^{*}_2(\boldsymbol{b}^{\prime}\boldsymbol{\alpha}-\alpha_t).
\end{eqnarray}
Differentiating  (\ref{Lobg}) with respect to $ \boldsymbol{a}$ and $ \boldsymbol{b}$ and equating them to vector $ \boldsymbol{0}$ of dimension $r\times 1$, we obtain 
\begin{eqnarray}\label{E1}
	(\boldsymbol{b}^{\prime}\boldsymbol{\Sigma}\boldsymbol{b})\boldsymbol{\Sigma}\boldsymbol{a} - (\boldsymbol{a}^{\prime}\boldsymbol{\Sigma}\boldsymbol{b})\boldsymbol{\Sigma}\boldsymbol{b} - \lambda_1\boldsymbol{1} - \lambda^{*}_1\boldsymbol{\alpha} = \boldsymbol{0},\\\label{E3}
	(\boldsymbol{a}^{\prime}\boldsymbol{\Sigma}\boldsymbol{a})\boldsymbol{\Sigma}\boldsymbol{b} - (\boldsymbol{a}^{\prime}\boldsymbol{\Sigma}\boldsymbol{b})\boldsymbol{\Sigma}\boldsymbol{a} - \lambda_2\boldsymbol{1} - \lambda^{*}_2\boldsymbol{\alpha} = \boldsymbol{0},
\end{eqnarray}respectively. Pre-multiplying (\ref{E1}) by $\boldsymbol{a}^{\prime}$ and simplifying the resulting equation, we obtain
\begin{equation}\label{E2}
	\lambda_1 + \lambda^{*}_1\alpha_s = |\boldsymbol{V}|.
\end{equation}Similarly, pre-multiplying (\ref{E3}) by $\boldsymbol{b}^{\prime}$  and simplifying the resulting equation, we obtain
\begin{equation}\label{E4}
	\lambda_2 + \lambda^{*}_2\alpha_t = |\boldsymbol{V}|.
\end{equation}
Next, pre-multiplying (\ref{E1}) by $\boldsymbol{b}^{\prime}$ and simplifying the resulting equation, we obtain
\begin{equation}\label{E5}
	\lambda_1 + \lambda^{*}_1\alpha_t = 0.
\end{equation}
Finally, pre-multiplying (\ref{E3}) by $\boldsymbol{a}^{\prime}$ and simplifying the resulting equation, we obtain
\begin{equation}\label{E6}
	\lambda_2 + \lambda^{*}_2\alpha_s = 0.
\end{equation}
By solving (\ref{E2}) and (\ref{E5}), we obtain
\begin{equation*}
	\lambda_1 = -\frac{|\boldsymbol{V}|\alpha_t}{\alpha_s - \alpha_t},~
\lambda^{*}_1 = \frac{|\boldsymbol{V}|}{\alpha_s - \alpha_t},
\end{equation*}and similarly by solving (\ref{E4}) and (\ref{E6}), we obtain
\begin{equation*}
\lambda_2 = -\frac{|\boldsymbol{V}|\alpha_s}{\alpha_t - \alpha_s},~
\lambda^{*}_2 = \frac{|\boldsymbol{V}|}{\alpha_t - \alpha_s}.
\end{equation*}
Next, pre-multiplying (\ref{E1}) by $\boldsymbol{\Sigma}^{-1}$ and substituting for $\lambda_1$ and $\lambda^{*}_1$, we obtain
\begin{equation}\label{E11}
	V_{22} \boldsymbol{a} -  V_{12} \boldsymbol{b} = \frac{|\boldsymbol{V}|}{\alpha_s - \alpha_t} \boldsymbol{\Sigma}^{-1} (\boldsymbol{\alpha} - \alpha_t\boldsymbol{1}).
\end{equation}Similarly,  by pre-multiplying (\ref{E3}) by $\boldsymbol{\Sigma}^{-1}$ and substituting the values of $\lambda_2$ and $\lambda^{*}_2$, we obtain
\begin{equation}\label{E12}
	V_{11} \boldsymbol{b} -  V_{12} \boldsymbol{a} = \frac{|\boldsymbol{V}|}{\alpha_t - \alpha_s} \boldsymbol{\Sigma}^{-1} (\boldsymbol{\alpha} - \alpha_s\boldsymbol{1}).
\end{equation}Note that $\boldsymbol{\alpha} - \alpha_s\boldsymbol{1} = (\alpha_1 - \alpha_s,\cdots,\alpha_r - \alpha_s)^{\prime}_{r\times 1}$ and $\boldsymbol{\alpha} - \alpha_t\boldsymbol{1} = (\alpha_1 - \alpha_t, \cdots,\alpha_r - \alpha_t)^{\prime}_{r\times 1}$, and let us denote them by $\boldsymbol{\alpha}_{(s)}$ and $\boldsymbol{\alpha}_{(t)}$, respectively. Thence, by writing the Eqs. (\ref{E11}) and (\ref{E12}) in a matrix form as
\begin{equation*}
	\begin{bmatrix}
		\boldsymbol{a} & \boldsymbol{b}
	\end{bmatrix}\begin{bmatrix}
	V_{22} & -V_{12}\\
	-V_{12} & V_{11}
\end{bmatrix} = \frac{|\boldsymbol{V}|}{\alpha_t - \alpha_s}\boldsymbol{\Sigma}^{-1}\begin{bmatrix}
-\boldsymbol{\alpha}_{(t)} & \boldsymbol{\alpha}_{(s)}
\end{bmatrix},
\end{equation*}we find 
\begin{equation}\label{E14}
	\begin{bmatrix}
		\boldsymbol{a} & \boldsymbol{b}
	\end{bmatrix} = \frac{1}{\alpha_t - \alpha_s}\boldsymbol{\Sigma}^{-1}\begin{bmatrix}
	-\boldsymbol{\alpha}_{(t)} & \boldsymbol{\alpha}_{(s)}
\end{bmatrix}\boldsymbol{V}.
\end{equation} 
The solution in (\ref{E14}) explicitly gives
\begin{eqnarray}\label{E15}
\boldsymbol{a} &=&  \frac{1}{\alpha_t - \alpha_s} \{  -V_{11}\boldsymbol{\Sigma}^{-1}\boldsymbol{\alpha}_{(t)} + V_{12}\boldsymbol{\Sigma}^{-1}\boldsymbol{\alpha}_{(s)}    \}, \\\label{E16}
\boldsymbol{b} &=&  \frac{1}{\alpha_t - \alpha_s} \{  -V_{12}\boldsymbol{\Sigma}^{-1}\boldsymbol{\alpha}_{(t)} + V_{22}\boldsymbol{\Sigma}^{-1}\boldsymbol{\alpha}_{(s)}    \}.
\end{eqnarray}	
Now, the unbiasedness conditions $\boldsymbol{a}^{\prime}\boldsymbol{1}=1$ and $\boldsymbol{a}^{\prime}\boldsymbol{\alpha}=\alpha_s$ give
\begin{eqnarray}\label{E17}
	\frac{1}{\alpha_t - \alpha_s} \{  -V_{11}(\boldsymbol{\alpha}_{(t)}^{\prime}\boldsymbol{\Sigma}^{-1}\boldsymbol{1}) + V_{12}(\boldsymbol{\alpha}_{(s)}^{\prime}\boldsymbol{\Sigma}^{-1}\boldsymbol{1})    \} &=& 1,\\\label{E18}
\frac{1}{\alpha_t - \alpha_s} \{  -V_{11}(\boldsymbol{\alpha}_{(t)}^{\prime}\boldsymbol{\Sigma}^{-1}\boldsymbol{\alpha}) + V_{12}(\boldsymbol{\alpha}_{(s)}^{\prime}\boldsymbol{\Sigma}^{-1}\boldsymbol{\alpha})    \} &=& \alpha_s.
\end{eqnarray}Solving (\ref{E17}) and (\ref{E18}) for $V_{11}$ and $V_{12}$, we get
\begin{eqnarray} \label{E19}
	V_{11}	&=& -(\alpha_t - \alpha_s)\frac{\boldsymbol{\alpha}_{(s)}^{\prime}\boldsymbol{\Sigma}^{-1}\boldsymbol{\alpha}_{(s)}}{	\left(\boldsymbol{\alpha}_{(t)}^{\prime}\boldsymbol{\Sigma}^{-1}\boldsymbol{1}\boldsymbol{\alpha}_{(s)}^{\prime} - \boldsymbol{\alpha}_{(s)}^{\prime}\boldsymbol{\Sigma}^{-1}\boldsymbol{1}\boldsymbol{\alpha}_{(t)}^{\prime}\right)\boldsymbol{\Sigma}^{-1}\boldsymbol{\alpha}},\\\label{E20}
	V_{12} &=& -(\alpha_t - \alpha_s)\frac{\boldsymbol{\alpha}_{(t)}^{\prime}\boldsymbol{\Sigma}^{-1}\boldsymbol{\alpha}_{(s)}}{ \left(\boldsymbol{\alpha}_{(t)}^{\prime}\boldsymbol{\Sigma}^{-1}\boldsymbol{1}\boldsymbol{\alpha}_{(s)}^{\prime} - \boldsymbol{\alpha}_{(s)}^{\prime}\boldsymbol{\Sigma}^{-1}\boldsymbol{1}\boldsymbol{\alpha}_{(t)}^{\prime}\right)\boldsymbol{\Sigma}^{-1}\boldsymbol{\alpha}}.
\end{eqnarray}Now, we shall further simplify the above derived expressions. For this let us denote 
\begin{equation*}
	\boldsymbol{\Sigma} = \begin{bmatrix}
		\sigma_{11} & \sigma_{12}&\cdots& \sigma_{1r}\\
		\sigma_{12} & \sigma_{22}&\cdots& \sigma_{2r}\\
		\vdots & \vdots& \cdots & \vdots \\
		\sigma_{1r} & \sigma_{2r} &\cdots & \sigma_{rr}
	\end{bmatrix}\mbox{and}~
\boldsymbol{\Sigma}^{-1} = \begin{bmatrix}
	\sigma^{11} & \sigma^{12}&\cdots& \sigma^{1r}\\
	\sigma^{12} & \sigma^{22}&\cdots& \sigma^{2r}\\
	\vdots & \vdots& \cdots & \vdots \\
	\sigma^{1r} & \sigma^{2r} &\cdots & \sigma^{rr}
\end{bmatrix}.
\end{equation*}Then,
\begin{equation*}
	\boldsymbol{\Sigma}^{-1}\boldsymbol{1}= \begin{bmatrix}
			\sigma^{11} + \sigma^{12}+\cdots+ \sigma^{1r}\\
		\sigma^{12} + \sigma^{22}+\cdots+ \sigma^{2r}\\
		 \vdots \\
		\sigma^{1r} + \sigma^{2r} +\cdots + \sigma^{rr}
	\end{bmatrix}=\begin{bmatrix}
	R_1\\
	R_2\\
	\vdots \\
	R_r
\end{bmatrix},
\end{equation*}where $R_i = \displaystyle\sum_{j=1}^{r} \sigma^{ji}$ is the sum of the $i$th row of $\boldsymbol{\Sigma}^{-1}$, for $i=1, \cdots, r$, and similarly
\begin{equation*}
	\boldsymbol{\Sigma}^{-1}\boldsymbol{\alpha}= \begin{bmatrix}
		\alpha_1\sigma^{11} + \alpha_2\sigma^{12}+\cdots+ \alpha_r\sigma^{1r}\\
		\alpha_1\sigma^{12} + \alpha_2\sigma^{22}+\cdots+ \alpha_r\sigma^{2r}\\
		\vdots \\
		\alpha_1\sigma^{1r} + \alpha_2\sigma^{2r} +\cdots + \alpha_r\sigma^{rr}
	\end{bmatrix}=\begin{bmatrix}
		S_1\\
		S_2\\
		\vdots \\
		S_r
	\end{bmatrix},
\end{equation*}where $S_i = \displaystyle\sum_{j=1}^{r} \alpha_j\sigma^{ji},$ for $i=1, \cdots, r$, is the sum of $i$th row of $	\boldsymbol{\Sigma}^{-1}\boldsymbol{\alpha}$. With these notations, we have
\begin{eqnarray*}
	\boldsymbol{\alpha}_{(t)}^{\prime}\boldsymbol{\Sigma}^{-1}\boldsymbol{1}\boldsymbol{\alpha}_{(s)}^{\prime} &=&  \left(\sum_{i=1}^{r} R_i(\alpha_i - \alpha_t)\right)\left[\alpha_1 - \alpha_s, \cdots, \alpha_r - \alpha_s\right],\\
 \boldsymbol{\alpha}_{(s)}^{\prime}\boldsymbol{\Sigma}^{-1}\boldsymbol{1}\boldsymbol{\alpha}_{(t)}^{\prime} &=&  \left(\sum_{i=1}^{r} R_i(\alpha_i - \alpha_s)\right)\left[\alpha_1 - \alpha_t, \cdots, \alpha_r - \alpha_t\right],	
\end{eqnarray*}and consequently,	
\begin{equation*}
	\boldsymbol{\alpha}_{(t)}^{\prime}\boldsymbol{\Sigma}^{-1}\boldsymbol{1}\boldsymbol{\alpha}_{(s)}^{\prime} -   \boldsymbol{\alpha}_{(s)}^{\prime}\boldsymbol{\Sigma}^{-1}\boldsymbol{1}\boldsymbol{\alpha}_{(t)}^{\prime} = (\alpha_s - \alpha_t)\left[\alpha_1R - R^{*},\cdots, \alpha_rR - R^{*}\right],
\end{equation*}where $R=\sum_{i=1}^r R_i$ and $R^{*}=\sum_{i=1}^r R_i\alpha_i$. Observe that $R =  \sum_{i=1}^r R_i ~=~ \boldsymbol{1}^{\prime} \boldsymbol{\Sigma}^{-1} \boldsymbol{1}$, $R^{*} = \sum_{i=1}^r R_i\alpha_i ~=~ \boldsymbol{\alpha}^{\prime} \boldsymbol{\Sigma}^{-1} \boldsymbol{1}$, $\sum_{i=1}^r S_i = \boldsymbol{1}^{\prime} \boldsymbol{\Sigma}^{-1} \boldsymbol{\alpha} ~=~ \boldsymbol{\alpha}^{\prime} \boldsymbol{\Sigma}^{-1} \boldsymbol{1}$ and $\sum_{i=1}^r \alpha_iS_i = \boldsymbol{\alpha}^{\prime} \boldsymbol{\Sigma}^{-1} \boldsymbol{\alpha}$. With these expressions, the denominators of (\ref{E19}) and (\ref{E20}) become
\begin{eqnarray*}
&&(\alpha_s - \alpha_t)\left[\alpha_1R - R^{*}, \alpha_2R - R^{*},\cdots, \alpha_rR - R^{*}\right]\begin{bmatrix}
		S_1\\
		S_2\\
		\vdots \\
		S_r
	\end{bmatrix}\\
&=& (\alpha_s - \alpha_t) \left( R\sum_{i=1}^r \alpha_iS_i - R^{*}\sum_{i=1}^r S_i \right)\\
&=& (\alpha_s - \alpha_t) \bigtriangleup.
\end{eqnarray*}So, from (\ref{E19}) and (\ref{E20}) we simply obtain
\begin{equation*}
	V_{11} = \frac{\boldsymbol{\alpha}_{(s)}^{\prime}\boldsymbol{\Sigma}^{-1}\boldsymbol{\alpha}_{(s)}}{\bigtriangleup}\mbox{~and~} V_{12} = \frac{\boldsymbol{\alpha}_{(t)}^{\prime}\boldsymbol{\Sigma}^{-1}\boldsymbol{\alpha}_{(s)}}{\bigtriangleup}.
\end{equation*}

If we use the unbiasedness conditions $\boldsymbol{b}^{\prime}\boldsymbol{1}=1$ and $\boldsymbol{b}^{\prime}\boldsymbol{\alpha}=\alpha_t$ and proceed exactly as above, we obtain
\begin{equation*}
	V_{12} = \frac{\boldsymbol{\alpha}_{(t)}^{\prime}\boldsymbol{\Sigma}^{-1}\boldsymbol{\alpha}_{(s)}}{\bigtriangleup}\mbox{~and~} V_{22} = \frac{\boldsymbol{\alpha}_{(t)}^{\prime}\boldsymbol{\Sigma}^{-1}\boldsymbol{\alpha}_{(t)}}{\bigtriangleup},
\end{equation*}showing the uniqueness of the obtained solution.	Substituting for $V_{11}$ and $V_{12}$ in (\ref{E15}), we obtain
\begin{equation}\label{E32}
	\boldsymbol{a} = \frac{1}{(\alpha_t - \alpha_s)\bigtriangleup} \{ (\boldsymbol{\alpha}_{(s)}^{\prime}\boldsymbol{\Sigma}^{-1}\boldsymbol{\alpha}_{(t)})  \boldsymbol{\Sigma}^{-1}\boldsymbol{\alpha}_{(s)} - (\boldsymbol{\alpha}_{(s)}^{\prime}\boldsymbol{\Sigma}^{-1}\boldsymbol{\alpha}_{(s)})  \boldsymbol{\Sigma}^{-1}\boldsymbol{\alpha}_{(t)} \},
\end{equation}which, after some algebraic calculations, can be shown to be as given in (\ref{a}). Similarly, substituting $V_{12}$ and $V_{22}$ in (\ref{E16}) and simplifying, we obtain an explicit expression for the coefficient vector $\boldsymbol{b}$ as given in (\ref{b}).
\end{proof}
\begin{cor}
In the process of the above derivation, the variance-covariance matrix of the joint predictors is readily found to be
{\small{\begin{equation*}
	\mbox{Var}\begin{pmatrix}
		\tilde{X}_s\\
		\tilde{X}_t
	\end{pmatrix}= \mbox{Var}\begin{pmatrix}
		\boldsymbol{a}^{\prime} \boldsymbol{X} \\
		\boldsymbol{b}^{\prime} \boldsymbol{X} 
	\end{pmatrix} = \sigma^2 \begin{bmatrix}
	\boldsymbol{a}^{\prime} \boldsymbol{\Sigma}\boldsymbol{a} & \boldsymbol{a}^{\prime} \boldsymbol{\Sigma}\boldsymbol{b}\\
	\boldsymbol{a}^{\prime} \boldsymbol{\Sigma}\boldsymbol{b} & \boldsymbol{b}^{\prime} \boldsymbol{\Sigma}\boldsymbol{b}
\end{bmatrix} = \frac{\sigma^2}{\bigtriangleup}\begin{bmatrix}
\boldsymbol{\alpha}_{(s)}^{\prime}\boldsymbol{\Sigma}^{-1}\boldsymbol{\alpha}_{(s)} & \boldsymbol{\alpha}_{(t)}^{\prime}\boldsymbol{\Sigma}^{-1}\boldsymbol{\alpha}_{(s)}\\
\boldsymbol{\alpha}_{(t)}^{\prime}\boldsymbol{\Sigma}^{-1}\boldsymbol{\alpha}_{(s)} & \boldsymbol{\alpha}_{(t)}^{\prime}\boldsymbol{\Sigma}^{-1}\boldsymbol{\alpha}_{(t)}
\end{bmatrix},
\end{equation*}}}where $\bigtriangleup$ is the generalized variance as given in (\ref{bigdelta}). 
\end{cor}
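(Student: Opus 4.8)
The plan is to obtain the corollary as an essentially free by-product of the proof of the Theorem, and then to indicate a short self-contained check. Since $\mathrm{Var}(\boldsymbol{X})=\sigma^2\boldsymbol{\Sigma}$ and the predictors are the linear statistics $\tilde{X}_s=\boldsymbol{a}^{\prime}\boldsymbol{X}$, $\tilde{X}_t=\boldsymbol{b}^{\prime}\boldsymbol{X}$, one has at once
\[
\mathrm{Var}\!\begin{pmatrix}\tilde{X}_s\\ \tilde{X}_t\end{pmatrix}=\sigma^2\begin{bmatrix}\boldsymbol{a}^{\prime}\boldsymbol{\Sigma}\boldsymbol{a} & \boldsymbol{a}^{\prime}\boldsymbol{\Sigma}\boldsymbol{b}\\ \boldsymbol{a}^{\prime}\boldsymbol{\Sigma}\boldsymbol{b} & \boldsymbol{b}^{\prime}\boldsymbol{\Sigma}\boldsymbol{b}\end{bmatrix}=\sigma^2\begin{bmatrix}V_{11}&V_{12}\\ V_{12}&V_{22}\end{bmatrix},
\]
so the whole content is the identification of the three scalars $V_{11},V_{12},V_{22}$. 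These were already pinned down in the course of proving the Theorem: the unbiasedness constraints on $\boldsymbol{a}$ forced $V_{11}=\boldsymbol{\alpha}_{(s)}^{\prime}\boldsymbol{\Sigma}^{-1}\boldsymbol{\alpha}_{(s)}/\bigtriangleup$ and $V_{12}=\boldsymbol{\alpha}_{(t)}^{\prime}\boldsymbol{\Sigma}^{-1}\boldsymbol{\alpha}_{(s)}/\bigtriangleup$, and the constraints on $\boldsymbol{b}$ reproduced the same $V_{12}$ while adding $V_{22}=\boldsymbol{\alpha}_{(t)}^{\prime}\boldsymbol{\Sigma}^{-1}\boldsymbol{\alpha}_{(t)}/\bigtriangleup$. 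Substituting these into the matrix above gives the claimed form, with $\bigtriangleup$ the generalized variance in (\ref{bigdelta}).

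For a verification that does not reuse $V_{11},V_{12},V_{22}$, I would start from the explicit coefficient vectors. Equation (\ref{E32}) gives $\boldsymbol{a}$, and the twin manipulation of (\ref{E16}) gives
\[
\boldsymbol{b}=\frac{1}{(\alpha_t-\alpha_s)\bigtriangleup}\bigl\{(\boldsymbol{\alpha}_{(t)}^{\prime}\boldsymbol{\Sigma}^{-1}\boldsymbol{\alpha}_{(t)})\,\boldsymbol{\Sigma}^{-1}\boldsymbol{\alpha}_{(s)}-(\boldsymbol{\alpha}_{(t)}^{\prime}\boldsymbol{\Sigma}^{-1}\boldsymbol{\alpha}_{(s)})\,\boldsymbol{\Sigma}^{-1}\boldsymbol{\alpha}_{(t)}\bigr\}.
\]
Pre-multiplying these by $\boldsymbol{\Sigma}$ collapses the factors $\boldsymbol{\Sigma}\boldsymbol{\Sigma}^{-1}$, so that $\boldsymbol{\Sigma}\boldsymbol{a}$ and $\boldsymbol{\Sigma}\boldsymbol{b}$ become explicit linear combinations of $\boldsymbol{\alpha}_{(s)}$ and $\boldsymbol{\alpha}_{(t)}$. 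The key algebraic device is then the set of scalar identities forced by the constraints themselves, namely $\boldsymbol{a}^{\prime}\boldsymbol{\alpha}_{(s)}=\boldsymbol{a}^{\prime}\boldsymbol{\alpha}-\alpha_s\boldsymbol{a}^{\prime}\boldsymbol{1}=0$, $\boldsymbol{a}^{\prime}\boldsymbol{\alpha}_{(t)}=\alpha_s-\alpha_t$, $\boldsymbol{b}^{\prime}\boldsymbol{\alpha}_{(s)}=\alpha_t-\alpha_s$, and $\boldsymbol{b}^{\prime}\boldsymbol{\alpha}_{(t)}=0$: pre-multiplying $\boldsymbol{\Sigma}\boldsymbol{a}$ by $\boldsymbol{a}^{\prime}$ and by $\boldsymbol{b}^{\prime}$, and $\boldsymbol{\Sigma}\boldsymbol{b}$ by $\boldsymbol{b}^{\prime}$, makes one term drop out each time, the surviving term simplifying as the stray $(\alpha_t-\alpha_s)$ cancels, and one is left with exactly $\boldsymbol{a}^{\prime}\boldsymbol{\Sigma}\boldsymbol{a}=\boldsymbol{\alpha}_{(s)}^{\prime}\boldsymbol{\Sigma}^{-1}\boldsymbol{\alpha}_{(s)}/\bigtriangleup$, $\boldsymbol{a}^{\prime}\boldsymbol{\Sigma}\boldsymbol{b}=\boldsymbol{\alpha}_{(t)}^{\prime}\boldsymbol{\Sigma}^{-1}\boldsymbol{\alpha}_{(s)}/\bigtriangleup$ and $\boldsymbol{b}^{\prime}\boldsymbol{\Sigma}\boldsymbol{b}=\boldsymbol{\alpha}_{(t)}^{\prime}\boldsymbol{\Sigma}^{-1}\boldsymbol{\alpha}_{(t)}/\bigtriangleup$.

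There is no genuinely hard step here; the corollary is bookkeeping internal to the Theorem. The only point worth a remark is the internal consistency of the two routes to $V_{12}$ — it arises both from the constraints on $\boldsymbol{a}$ and from those on $\boldsymbol{b}$ — which is what certifies that the Lagrangian stationary point is well defined and that the displayed matrix is genuinely the variance-covariance matrix of the D-optimal pair; this consistency was already noted inside the proof of the Theorem.
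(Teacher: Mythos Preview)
Your proposal is correct and matches the paper's approach: the corollary is presented there without separate proof, precisely because $V_{11}$, $V_{12}$, $V_{22}$ were already identified inside the proof of the Theorem from the unbiasedness constraints, exactly as you describe in your first paragraph. Your additional direct verification from the explicit $\boldsymbol{a}$, $\boldsymbol{b}$ via the identities $\boldsymbol{a}^{\prime}\boldsymbol{\alpha}_{(s)}=0$, $\boldsymbol{a}^{\prime}\boldsymbol{\alpha}_{(t)}=\alpha_s-\alpha_t$, etc., is not in the paper but is a correct and welcome independent check.
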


\section{Non-existence of joint prediction}
\paragraph{}
In this section, we discuss two cases in which the joint predictors do not exist. 
\subsection{Prediction of more than two order statistics in location-scale family}
\paragraph{}
We now formally show that the joint prediction of more than two order statistics is not possible when the lifetimes come from a location-scale family of distributions. We will first prove our claim for the joint prediction of three order statistics. 

Suppose we wish to jointly predict three order statistics $X_{s:n}$, $X_{t:n}$ and $X_{u:n}$ for $r<s<t<u\leq n$. Now, let us assume that the joint predictors of them are of the form $\tilde{X}_s = \boldsymbol{a}^{\prime}\boldsymbol{X}$, $\tilde{X}_t = \boldsymbol{b}^{\prime}\boldsymbol{X}$ and $\tilde{X}_u = \boldsymbol{c}^{\prime}\boldsymbol{X}$, respectively, where $\boldsymbol{a}^{\prime}$, $\boldsymbol{b}^{\prime}$ and $\boldsymbol{c}^{\prime}$ are all coefficient vectors of dimension $r\times 1$. By applying the Lagrangian method as done in Section 2, we will obtain 
\begin{eqnarray*}
	\boldsymbol{a}^{\prime} &=& 	\frac{V_{11}}{\alpha_t - \alpha_s}(\alpha_t\boldsymbol{1}^{\prime}\boldsymbol{\Sigma}^{-1} -\boldsymbol{\alpha}^{\prime}\boldsymbol{\Sigma}^{-1}) + 
	\frac{V_{12}}{\alpha_u - \alpha_t}(\alpha_u\boldsymbol{1}^{\prime}\boldsymbol{\Sigma}^{-1} -\boldsymbol{\alpha}^{\prime}\boldsymbol{\Sigma}^{-1})\\
	&& 	+\frac{V_{13}}{\alpha_s - \alpha_u}(\alpha_s\boldsymbol{1}^{\prime}\boldsymbol{\Sigma}^{-1} -\boldsymbol{\alpha}^{\prime}\boldsymbol{\Sigma}^{-1}),\\
		\boldsymbol{b}^{\prime} &=& 	\frac{V_{12}}{\alpha_t - \alpha_s}(\alpha_t\boldsymbol{1}^{\prime}\boldsymbol{\Sigma}^{-1} -\boldsymbol{\alpha}^{\prime}\boldsymbol{\Sigma}^{-1}) + 
		\frac{V_{22}}{\alpha_u - \alpha_t}(\alpha_u\boldsymbol{1}^{\prime}\boldsymbol{\Sigma}^{-1} -\boldsymbol{\alpha}^{\prime}\boldsymbol{\Sigma}^{-1})\\
		&&+\frac{V_{23}}{\alpha_s - \alpha_u}(\alpha_s\boldsymbol{1}^{\prime}\boldsymbol{\Sigma}^{-1} -\boldsymbol{\alpha}^{\prime}\boldsymbol{\Sigma}^{-1}),\\
	\end{eqnarray*}
\begin{eqnarray*}
		\boldsymbol{c}^{\prime}&=& 	\frac{V_{13}}{\alpha_t - \alpha_s}(\alpha_t\boldsymbol{1}^{\prime}\boldsymbol{\Sigma}^{-1} -\boldsymbol{\alpha}^{\prime}\boldsymbol{\Sigma}^{-1}) + 
		\frac{V_{23}}{\alpha_u - \alpha_t}(\alpha_u\boldsymbol{1}^{\prime}\boldsymbol{\Sigma}^{-1} -\boldsymbol{\alpha}^{\prime}\boldsymbol{\Sigma}^{-1})\\
		&&+\frac{V_{33}}{\alpha_s - \alpha_u}(\alpha_s\boldsymbol{1}^{\prime}\boldsymbol{\Sigma}^{-1} -\boldsymbol{\alpha}^{\prime}\boldsymbol{\Sigma}^{-1}),
\end{eqnarray*}where the quantities $V_{11}$, $V_{12}$, $V_{13}$, $V_{22}$, $V_{23}$ and $V_{33}$, all of which are to be determined, are the elements of the variance-covariance matrix of the joint predictors given by
\begin{equation*}
	\sigma^2\begin{bmatrix}
		V_{11} & V_{12} & V_{13} \\
		V_{12} & V_{22} & V_{23} \\
		V_{13} & V_{23} & V_{33}
	\end{bmatrix} =
\sigma^2\begin{bmatrix}
	\boldsymbol{a}^{\prime}\boldsymbol{\Sigma}	\boldsymbol{a} & 	\boldsymbol{a}^{\prime}\boldsymbol{\Sigma}	\boldsymbol{b} & 	\boldsymbol{a}^{\prime}\boldsymbol{\Sigma}	\boldsymbol{c}\\
		\boldsymbol{a}^{\prime}\boldsymbol{\Sigma}	\boldsymbol{b} & 	\boldsymbol{b}^{\prime}\boldsymbol{\Sigma}	\boldsymbol{b} & 	\boldsymbol{b}^{\prime}\boldsymbol{\Sigma}	\boldsymbol{c}\\
			\boldsymbol{a}^{\prime}\boldsymbol{\Sigma}	\boldsymbol{c} & 	\boldsymbol{b}^{\prime}\boldsymbol{\Sigma}	\boldsymbol{c} & 	\boldsymbol{c}^{\prime}\boldsymbol{\Sigma}	\boldsymbol{c}
\end{bmatrix}.
\end{equation*}Now, observe that the two unbiasedness conditions on $\tilde{X}_{s}$, namely, $\boldsymbol{a}^{\prime}\boldsymbol{1} = 1$ and $\boldsymbol{a}^{\prime}\boldsymbol{\alpha} = \alpha_s$ will generate two linear equations in three unknowns $V_{11}$,  $V_{12}$ and $V_{13}$. The other unbiasedness conditions will similarly yield four other linear equations, and thus we end up with a system of the form  
\begin{equation}\label{LE}
	\begin{bmatrix}
		A_{11} & A_{12} & A_{13} & 0 & 0 & 0 \\
		B_{11} & B_{12} & B_{13} & 0 & 0 & 0 \\ 
		0 & A_{11} & 0 & A_{12} & A_{13} & 0 \\ 
		0 & B_{11} & 0 & B_{12} & B_{13} & 0  \\
		0 & 0 & A_{11} & 0 & A_{12} & A_{13} \\
		0 & 0 & 	B_{11} & 0 & 	B_{12} & 	B_{13} &  \\ 
	\end{bmatrix}
\begin{bmatrix}
V_{11}\\
V_{12}\\
V_{13}\\
V_{22}\\
V_{23}\\
V_{33}
\end{bmatrix}= \begin{bmatrix}
1\\
\alpha_s\\
1\\
\alpha_t\\
1\\
\alpha_u
\end{bmatrix},
\end{equation}where
\begin{equation*}
	A_{11} = \frac{\alpha_tR-R^{*}}{\alpha_t - \alpha_s}, A_{12} = \frac{\alpha_uR-R^{*}}{\alpha_u - \alpha_t}, A_{13} = \frac{\alpha_sR-R^{*}}{\alpha_s - \alpha_u},
\end{equation*}
\begin{equation*}
		B_{11} = \frac{\alpha_tR^{*}-\sum_{i=1}^{r}\alpha_iS_i}{\alpha_t - \alpha_s},
	B_{12}= \frac{\alpha_uR^{*}-\sum_{i=1}^{r}\alpha_iS_i}{\alpha_u - \alpha_t},
	B_{13}= \frac{\alpha_sR^{*}-\sum_{i=1}^{r}\alpha_iS_i}{\alpha_s - \alpha_u}.
\end{equation*}
Then, a straightforward matrix decomposition shows that the determinant of the coefficient matrix in (\ref{LE}) is zero implying the non-existence of the solution in (\ref{LE}). Hence, the claim. The non-existence of joint BLUPs for more than three order statistics can be established in a similar way. 

\subsection{Joint prediction in scale family}
\paragraph{}
The non-existence of joint prediction of order statistics can also be established when the parent distribution belongs to the scale family. Assuming the joint predictors $\tilde{X}_s = \boldsymbol{a}^{\prime}\boldsymbol{X}$ and $\tilde{X}_t = \boldsymbol{b}^{\prime}\boldsymbol{X}$, we have two unbiasedness conditions, namely, $\boldsymbol{a}^{\prime}\boldsymbol{\alpha} = \alpha_s$ and $\boldsymbol{b}^{\prime}\boldsymbol{\beta} = \alpha_t$. Now, by applying the Lagrangian method with Lagrangian multipliers $\lambda_1$ and $\lambda_2$ for the two unbiasedness conditions $\boldsymbol{a}^{\prime}\boldsymbol{\alpha} = \alpha_s$ and $\boldsymbol{b}^{\prime}\boldsymbol{\beta} = \alpha_t$, respectively, we have two equations
\begin{eqnarray}\label{E30}
	\lambda_1\boldsymbol{\alpha} &=& (\boldsymbol{b}^{\prime}\boldsymbol{\Sigma}\boldsymbol{b})\boldsymbol{\Sigma}\boldsymbol{a} - (\boldsymbol{a}^{\prime}\boldsymbol{\Sigma}\boldsymbol{b})\boldsymbol{\Sigma}\boldsymbol{b},\\\label{E31}
	\lambda_2\boldsymbol{\alpha}&=& (\boldsymbol{a}^{\prime}\boldsymbol{\Sigma}\boldsymbol{a})\boldsymbol{\Sigma}\boldsymbol{b} - (\boldsymbol{a}^{\prime}\boldsymbol{\Sigma}\boldsymbol{b})\boldsymbol{\Sigma}\boldsymbol{a}.
\end{eqnarray}Upon pre-multiplying (\ref{E30}) by $\boldsymbol{b}^{\prime}$, we get
\begin{eqnarray*}
	\lambda_1\alpha_t &=& (\boldsymbol{b}^{\prime}\boldsymbol{\Sigma}\boldsymbol{b})(\boldsymbol{b}^{\prime}\boldsymbol{\Sigma}\boldsymbol{a}) - (\boldsymbol{a}^{\prime}\boldsymbol{\Sigma}\boldsymbol{b})(\boldsymbol{b}^{\prime}\boldsymbol{\Sigma}\boldsymbol{b})\\
	&=&0,
\end{eqnarray*}which implies $\lambda_1=0$. Similarly, pre-multiplying (\ref{E31}) by $\boldsymbol{a}^{\prime}$ and simplifying, we get $\lambda_2=0$. We can then readily see the non-existence of the joint predictors in this case as well.

\section{Numerical results}
\paragraph{}
For illustrative purpose, let us consider a data set presented by \cite{Schmee_1979}. The data, assumed to follow a normal distribution, have the first $r=5$ failure times to be 87.0, 92.8, 117.1, 133.6 and 138.6, and so $\boldsymbol{X}=(87.0, 92.8, 117.1, 133.6, 138.6)^{\prime}$. Table 1 then presents a comparison between the marginal predictors in (\ref{BLUPs}) and the joint predictors developed in the last section. Two forms of efficiency measures are defined as follows:
\begin{equation*}
\mbox{D-efficiency} = 	\frac{
	\begin{aligned}
		\mbox{Determinant of the variance-covariance} \\
		\mbox{matrix of joint predictors $\tilde{X}_s$ and $\tilde{X}_t$}
	\end{aligned}
}{\begin{aligned}
	\mbox{Determinant of the variance-covariance} \\
	\mbox{matrix of marginal predictors $\hat{X}_s$ and $\hat{X}_t$}
\end{aligned}},
\end{equation*}
and
\begin{equation*}
	\mbox{Trace-efficiency} = 	\frac{
		\begin{aligned}
			\mbox{Trace of the variance-covariance} \\
			\mbox{matrix of joint predictors $\tilde{X}_s$ and $\tilde{X}_t$}
		\end{aligned}
	}{\begin{aligned}
			\mbox{Trace of the variance-covariance} \\
			\mbox{matrix of marginal predictors $\hat{X}_s$ and $\hat{X}_t$}
	\end{aligned}}.
\end{equation*}Then, the gain and loss in efficiency for the joint prediction can be defined as
\begin{equation*}
	\mbox{Efficiency gain} = 1 - \mbox{D-efficiency}
\end{equation*} 
and
\begin{equation*}
	\mbox{Efficiency loss} = 1 - \mbox{Trace-efficiency},
\end{equation*}respectively. Thence, the overall gain in efficiency can be defined as
\begin{equation*}
	\mbox{Overall efficiency gain} = 	\mbox{Efficiency gain} - \mbox{Efficiency loss}.
\end{equation*}Table 1 shows that, in all cases considered, Overall efficiency gain are positive indicating the advantage of using joint prediction. Further, the coefficients $\boldsymbol{a}$ and $\boldsymbol{b}$ are presented in Table 2 for three different scenarios: (i) $s=r+1, t=r+2$; (ii) $s=r+1, t=n$; and (iii) $s=n-1, t=n$, for various choices of $n=(10, 15, 20)$ and $r=(5, 10)$.

\section{Concluding remarks}
\paragraph{}
In this work, we have developed the joint best linear unbiased predictors of two  unobserved order statistics based on observed order statistics through D-optimality criterion. The advantage of using point predictors over marginal predictors is demonstrated by a real data set. A possible future research problem would be to develop an optimal compound design for joint prediction based on both Trace-efficiency and D-efficiency introduced in the last section.

\bibliographystyle{apalike}
\bibliography{ritwik_ref}

\begin{table}[h]
	\centering
	\small
	\caption{Comparisons between the marginal predictors of \cite{Kaminsky_1975} and the proposed joint predictors for the  data presented in \cite{Schmee_1979}.}
	\begin{tabular}{lllll}\toprule
		\multirow{3}{*}{Marginal predictors} & \multirow{3}{*}{Joint predictors} & \multirow{3}{*}{D-efficiency} & \multirow{3}{*}{Trace-efficiency} & Overall \\
		&&&& efficiency \\
		&&&&  gain\\\midrule
		
		$\hat{X}_{(6)} = 148.73$ & 	$\tilde{X}_{(6)} = 148.58$ &  \multirow{2}{*}{0.9737} & \multirow{2}{*}{0.9937} & \multirow{2}{*}{0.0200}\\
		$\hat{X}_{(7)} = 158.99$ & 	$\tilde{X}_{(7)} = 158.86$ &&&\\
		&&&&\\
		$\hat{X}_{(6)} = 148.73$ & 	$\tilde{X}_{(6)} = 148.58$ &  \multirow{2}{*}{0.9758} & \multirow{2}{*}{0.9954} & \multirow{2}{*}{0.0196}\\
		$\hat{X}_{(8)} = 170.36$ & 	$\tilde{X}_{(8)} = 170.25$ &&&\\
		
		&&&&\\
		$\hat{X}_{(6)} = 148.73$ & 	$\tilde{X}_{(6)} = 148.58$ &  \multirow{2}{*}{0.9777} & \multirow{2}{*}{0.9968} & \multirow{2}{*}{0.0190}\\
		$\hat{X}_{(9)} = 184.37$ & 	$\tilde{X}_{(9)} = 184.29$ &&&\\
		
		&&&&\\
		$\hat{X}_{(6)} = 148.73$ & 	$\tilde{X}_{(6)} = 148.58$ &  \multirow{2}{*}{0.9798} & \multirow{2}{*}{0.9980} & \multirow{2}{*}{0.0182}\\
		$\hat{X}_{(10)} = 206.19$ & 	$\tilde{X}_{(10)} = 206.12$ &&&\\

		&&&&\\
		$\hat{X}_{(7)} = 158.99$ & 	$\tilde{X}_{(7)} = 158.86$ &  \multirow{2}{*}{0.9785} & \multirow{2}{*}{0.9966} & \multirow{2}{*}{0.0181}\\
		$\hat{X}_{(8)} = 170.36$ & 	$\tilde{X}_{(8)} = 170.25$ &&&\\
		
		&&&&\\
		$\hat{X}_{(7)} = 158.99$ & 	$\tilde{X}_{(7)} = 158.86$ &  \multirow{2}{*}{0.9806} & \multirow{2}{*}{0.9976} & \multirow{2}{*}{0.0170}\\
		$\hat{X}_{(9)} = 184.37$ & 	$\tilde{X}_{(9)} = 184.29$ &&&\\	 
		
		&&&&\\
		$\hat{X}_{(7)} = 158.99$ & 	$\tilde{X}_{(7)} = 158.86$ &  \multirow{2}{*}{0.9829} & \multirow{2}{*}{0.9985} & \multirow{2}{*}{0.0157}\\
		$\hat{X}_{(10)} = 206.19$ & 	$\tilde{X}_{(10)} = 206.12$ &&&\\

		&&&&\\
		$\hat{X}_{(8)} = 170.35$ & 	$\tilde{X}_{(8)} = 170.25$ &  \multirow{2}{*}{0.9830} & \multirow{2}{*}{0.9983} & \multirow{2}{*}{0.0152}\\
		$\hat{X}_{(9)} = 184.37$ & 	$\tilde{X}_{(9)} = 184.29$ &&&\\	 	 	 
		
		&&&&\\
		$\hat{X}_{(8)} = 170.35$ & 	$\tilde{X}_{(8)} = 170.25$ &  \multirow{2}{*}{0.9854} & \multirow{2}{*}{0.9989} & \multirow{2}{*}{0.0135}\\
		$\hat{X}_{(10)} = 206.18$ & 	$\tilde{X}_{(10)} = 206.12$ &&&\\	 	
		
		&&&&\\
		$\hat{X}_{(9)} = 184.37$ & 	$\tilde{X}_{(9)} = 184.29$ &  \multirow{2}{*}{0.9877} & \multirow{2}{*}{0.9992} & \multirow{2}{*}{0.0116}\\
		$\hat{X}_{(10)} = 206.18$ & 	$\tilde{X}_{(10)} = 206.12$ &&&\\\bottomrule
	\end{tabular}
\end{table}

\begin{landscape}
	\begin{table}[h]
		\centering
		\caption{Coefficients $\boldsymbol{a}$ and $\boldsymbol{b}$ for some selected joint predictors for the normal parent distribution.}
		\begin{tabular}{lllllllllllll}\toprule
			\multirow{2}{*}{$n$} & \multirow{2}{*}{$r|s, t$} && $a_1$ & $a_2$ & $a_3$ & $a_4$ & $a_5$ & $a_6$ & $a_7$ &
			$a_8$ & $a_9$ & $a_{10}$ \\
			& & & $b_1$ & $b_2$ & $b_3$ & $b_4$ & $b_5$ & $b_6$ & $b_7$ & $b_8$ & $b_9$ & $b_{10}$ \\\midrule 
			\multirow{6}{*}{10} & \multirow{2}{*}{5| 6, 7} && -0.1843 & -0.0322 & 0.0382 & 0.0932 & 1.0852 & & &&&\\
			&              && -0.3088 & -0.0952 & 0.0037 & 0.0812 & 1.3191 &&&&&\\
			& \multirow{2}{*}{5| 6, 10} && -0.1843 & -0.0322 & 0.0382 & 0.0932 & 1.0852 &&&&& \\
			&               && -0.8809 & -0.3849 & -0.1547 & 0.0264 & 2.3941 &&&&&\\
			& \multirow{2}{*}{5| 9, 10} && -0.6165 & -0.251 &-0.0815 & 0.0517 & 1.8973 &&&&&\\
			&               && -0.8809 & -0.3849 & -0.1547 & 0.0264 & 2.3941 &&&&&\\\midrule
			\multirow{6}{*}{15} & \multirow{2}{*}{10| 11, 12} && -0.1215 & -0.0459 &-0.0127 &0.0124 &0.0338 &0.0530 &0.0709 &0.0882 &0.1051 &0.8168 \\    
			&        && -0.1696 &-0.0754& -0.0341& -0.0027 &0.0239 &0.0478 &0.0702 &0.0917 &0.1130& 0.9351 \\
			& \multirow{2}{*}{10| 11, 15} && -0.1215 &-0.0459 &-0.0127& 0.0124 &0.0338 &0.0530 &0.0709& 0.0882 &0.1051 &0.8168\\
			&                  && -0.4160 &-0.2266 &-0.1434& -0.0803& -0.0267& 0.0215 &0.0666& 0.1102& 0.1531& 1.5415\\
			&  \multirow{2}{*}{10| 14, 15} && -0.2982 &-0.1543& -0.0912& -0.0432& -0.0025& 0.0341& 0.0683& 0.1014& 0.1339& 1.2517\\
			&                    && -0.4160 &-0.2266& -0.1434& -0.0803& -0.0267& 0.0215& 0.0666& 0.1102& 0.1531& 1.5415\\\midrule
			\multirow{12}{*}{20} & \multirow{2}{*}{5| 6, 7} && -0.1206 & -0.0361& 0.0000 & 0.0250 & 1.1324 &  & & &&\\  
			&             && -0.2030 & -0.0846 & -0.0351 & 0.0013 & 1.3214 &&&&&\\
			&  \multirow{2}{*}{5| 6, 20} && -0.1206 & -0.0361 & 0.0000 & 0.0250 & 1.1324    &&&&&\\
			&                 && -1.5471 &-0.8758 &-0.594 &-0.3862 &4.4032 &&&&&\\
			& \multirow{2}{*}{5| 19, 20} && -1.2802 &-0.7187 &-0.4830& -0.3093& 3.7912 &&&&&\\
			&                  && -1.5471 &-0.8758 &-0.5940& -0.3862& 4.4032 &&&&&\\
			& \multirow{2}{*}{10| 11, 12} && -0.0841 &-0.0310 &-0.0086& 0.0080& 0.0218& 0.0338& 0.0448& 0.0550& 0.0646& 0.8958\\
			&                 && -0.1168 &-0.0518& -0.0243& -0.0040& 0.0129& 0.0277& 0.0411& 0.0536& 0.0654& 0.9962\\
			& \multirow{2}{*}{10| 11, 20} && -0.0841 &-0.0310& -0.0086& 0.0080& 0.0218& 0.0338& 0.0448& 0.0550& 0.0646& 0.8958\\
			&                  && -0.5557 &-0.3308 &-0.2358 &-0.1652 &-0.1067 &-0.0554& -0.0087& 0.0349 &0.0763 &2.347 \\
			&   \multirow{2}{*}{10| 19, 20} && -0.4356 &-0.2545& -0.1779& -0.1211& -0.0740& -0.0327& 0.0049& 0.0400& 0.0733& 1.9774\\
			&                     && -0.5557 &-0.3308 &-0.2358& -0.1652& -0.1067& -0.0554 &-0.0087& 0.0349& 0.0763& 2.3470\\\bottomrule
			
		\end{tabular}
		
	\end{table}
\end{landscape}

\end{document}